\newtheorem{theorem}{\bf Theorem}[section]
\newtheorem{definition}{\bf Definition}[section]
\newtheorem{proof}{\bf Proof}[section]
\def\qed{\hfill $\Box$}
\begin{document}

\preprint{APS/123-QED}

\title{Conflict-free joint sampling for preference satisfaction through quantum interference}

\author{Hiroaki Shinkawa$^1$}
\email{gokukyukyoku555@g.ecc.u-tokyo.ac.jp}
\author{Nicolas Chauvet$^1$}
\author{Andr\'{e} R\"{o}hm$^1$}
\author{Takatomo Mihana$^1$}
\author{Ryoichi Horisaki$^1$}
\author{Guillaume Bachelier$^2$}
\author{Makoto Naruse$^1$}
\affiliation{%
$^1$ Graduate School of Information Science and Technology, The University of Tokyo,\\
  7-3-1 Hongo, Bunkyo-ku, Tokyo 113-8656, Japan\\
$^2$ Univ. Grenoble Alpes, CNRS, Institut N\'{e}el, 38000 Grenoble, France
}%


\date{\today}

\begin{abstract}
Collective decision-making is vital for recent information and communications technologies. In our previous research, we mathematically derived conflict-free joint decision-making that optimally satisfies players' probabilistic preference profiles. However, two problems exist regarding the optimal joint decision-making method. First, as the number of choices increases, the computational cost of calculating the optimal joint selection probability matrix explodes. Second, to derive the optimal joint selection probability matrix, all players must disclose their probabilistic preferences. Now, it is noteworthy that explicit calculation of the joint probability distribution is not necessarily needed; what is necessary for collective decisions is sampling. This study examines several sampling methods that converge to heuristic joint selection probability matrices that satisfy players' preferences. We show that they can significantly reduce the above problems of computational cost and confidentiality. We analyze the probability distribution each of the sampling methods converges to, as well as the computational cost required and the confidentiality secured.
In particular, we introduce two conflict-free joint sampling methods through quantum interference of photons.
The first system allows the players to hide their choices while satisfying the players' preferences almost perfectly when they have the same preferences.
The second system, where the physical nature of light replaces the expensive computational cost, also conceals their choices under the assumption that they have a trusted third party.\\
This paper has been published in Ref. \cite{PhysRevApplied.18.064018} (DOI: 10.1103/PhysRevApplied.18.064018).
\end{abstract}

\maketitle

\section{Introduction}
The problem of allocating indivisible commodities, that is, resources that cannot be divided into multiple parts, such as houses and people, has been studied for a long time \cite{shapley1974cores, svensson1999strategy, alkan1991fair, bogomolnaia2001new}.
One of the most well-known studies is the Top Trading Cycle (TTC) proposed by Shapley and Scarf \cite{shapley1974cores}. 
TTC is a deterministic allocation method that deals with situations in which players have deterministic preference rankings over their options.
From a game-theoretic perspective, TTC achieves what is called core, namely, a situation in which exchanging options among arbitrary players does not lead to a more preference-satisfying allocation.

In the previous study, we  extended the preference from a deterministic to a probabilistic one and mathematically discussed how a joint selection that satisfies players' probabilistic preferences should be made \cite{shinkawa2022optimal}.
In uncertain situations, people in the real world or agents in reinforcement learning are often torn between the desire to choose the best current option and the desire to explore other options \cite{march1991exploration}.
In such situations, they will not be satisfied with obtaining only the top preference option all the time. Instead, they will be satisfied if the proportion of options obtained through multiple allocations matches their probabilistic preferences.

Specifically, let $p_{i, j}$ be the joint probability of assigning the $i$-th choice to the first player A and the $j$-th choice to the second player B, and let the matrix of these joint probabilities be called the joint selection probability matrix.
By definition, the probability of the first player choosing each option $i$ as a result of the joint selection probability matrix can be calculated by
\begin{equation}
  \pi_A(i) = \sum_{j=1}^N p_{i,j},
\end{equation}
where $N$ is the number of options. For the allocation to satisfy player A's preference, the list of $\pi_A(i)$ should coincide with his/her probabilistic preference.

In Ref. \cite{shinkawa2022optimal}, a joint selection probability matrix that maximizes the satisfaction of the probabilistic preferences of two players is mathematically derived.
However, as we will confirm later, two concerns exist. First, the computational cost of obtaining the optimal joint selection probability matrix is $\mathcal{O}(N^2)$ in the worst case if we follow the algorithm presented in the paper, which implies that computing the matrix becomes more difficult as the number of options becomes huge.
The second problem is that of confidentiality. Since the construction of the optimal joint selection probability matrix requires information on the preferences of both players, each player must disclose their preference to the other or third parties.

Now, when we encounter situations of collective decisions without choice conflict, it is not always necessary to explicitly calculate the values of the joint selection probability matrix. 
Instead, employing a sampling method that converges to those values over many repetitions is often sufficient.
Although an efficient sampling method that is based on the optimal joint selection probability matrix has not been established yet, this paper proposes several sampling methods each of which converges to a heuristic joint selection probability matrix over many repetitions.
In particular, we demonstrate that the application of quantum systems can significantly reduce the two problems mentioned above, that is, the explosion of computational cost and the lack of confidentiality.

Optical computing, which flourished around the 1980s \cite{wetzstein2020inference}, smoldered somewhat due to the rapid advances in electronic technology, but is now drawing attention again due to the increasing demand for computational resources caused by AI and so on in recent years \cite{kitayama2019novel, shastri2021photonics, feldmann2021parallel}. It is now being considered for a wide range of applications, including deep learning and computational science, taking advantage of not only the high-speed and broadband nature of light but also its quantum nature \cite{lin2018all, van2017advances, o2007optical}.
For example, a lot of combinatorial optimization problems are regarded as NP-hard, and thus it is difficult for a digital computer to solve them as the size of the problem increases.
However, for some types of problems, even huge combinatorial optimization problems can be solved within short time by mapping them to corresponding Ising models \cite{barahona1982computational, lucas2014ising} and solving the models using Ising machines, for example, with networks of optical parametric oscillators \cite{mcmahon2016fully, bohm2019poor}. 
Optically implemented Ising machines have their advantages over other types of Ising machines, including their ability to work at room temperature \cite{marandi2014network}, high efficiency due to light's broad bandwidth \cite{inagaki2016coherent}, and so on.
While Ising models are often used to solve combinatorial problems such as the Max-Cut problem and the traveling salesman problem, other photonic implementations can be considered for different types of problems. Among them, recent attempts have been made to utilize the quantum nature of light to solve a decision-making problem called the multi-armed bandit problem \cite{sutton2018reinforcement}.

The multi-armed bandit problem is one of the simplest reinforcement learning problems, and it is a question of how decisions should be made in uncertain situations.
Specifically, given multiple slot machines, each with its own probability of generating a reward, the question is how to maximize the cumulative rewards by drawing one of these machines at each time step.
Since the player does not know the hit probabilities a priori, one of the efficient algorithms is to make decisions based on a probabilistic preference so that he/she can both exploit the current best option and explore other options.

The problem is further complicated when multiple players participate in the bandit problem \cite{besson2018multi, lai2010cognitive, kim2016harnessing}. 
In the competitive bandit problem, when multiple players draw the same machine and that machine generates a reward, the reward is split and distributed among them. 
In such a situation, if we consider the expected value of the total reward, we can see that it is always better if the players' choices do not overlap.

What typically happens in the competitive bandit problem is that if each player draws a machine according to only his/her own selection probability, selection conflicts will occur frequently and the final cumulative rewards will be reduced.
However, the quantum nature of light can be used to link individual reward maximization with total reward maximization. Chauvet {\it{et al.}} devised a system that uses entanglement of polarization to prevent selection conflicts without direct communication between the players, and experimentally demonstrated the effectiveness of this system to tackle the competitive multi-armed bandit problem \cite{chauvet2019entangled, chauvet2020entangled}.

The advantage of utilizing the quantum nature of light here is twofold. The first is that it enables the players to conduct probabilistic decision-making through the observation of polarized light. Specifically, the stochasticity associated with the observation of polarization can be linked to probabilistic decision-making by mapping the choice of the machine to the polarization observed in a way that if the photon is detected by an avalanche photodiode corresponding to the horizontally polarized light, the player will select the first machine and vice versa \cite{naruse2015single}. The second advantage of using the quantum nature is that entanglement guarantees that the players' choices never overlap. 
The team reward will not be diminished thanks to the non-conflict decisions by the players \cite{chauvet2019entangled}. Furthermore, Amakasu {\it{et al.}} theoretically showed that conflict-free collective decision-making is possible over an arbitrary number of choices by employing orbital angular momentum of light to overcome the limitations of the number of choices in the case of polarization-based approaches \cite{amakasu2021conflict}. Orbital angular momentum is another degree-of-freedom associated with photons. It carries theoretically infinite numbers of states, and is widely utilized in applications such as optical communications \cite{allen2016optical, willner2015optical}.

Those previous studies aimed to maximize total cumulative rewards by making conflict-free decisions. 
Conversely, the present research, as well as the related former work \cite{shinkawa2022optimal}, exclude external factors such as rewards. 
Instead, the focus is on how to accomplish the maximization of preference satisfaction; that is, how well the player's preference is reflected in the joint decision.

In this study, we demonstrate that quantum systems can be utilized in the preference satisfaction problem.
In Sec. \ref{subsec:prob_settings}, we first review the problem settings of probabilistic preference satisfaction. 
In the subsequent Sec. \ref{subsec:optimal_thm}, we review theorems about the optimal joint selection matrix and clarify the issues related to the construction of the optimal joint selection matrix.
After that, we propose and demonstrate sampling methods that converge to heuristic joint selection probability matrices.
In particular, Sec. \ref{sec:heuristics} covers two sampling methods through quantum interference, each of which is analyzed in detail in terms of implementation, computational cost, confidentiality, and the joint selection probability matrix it converges to.
In Sec. \ref{sec:demo}, we compare the losses (defined in Sec. \ref{sec:prev}) of the joint selection probability matrices to which the sampling methods converge through numerical calculations.
Finally, Sec. \ref{sec:conclusion} provides a summary of this research and future perspectives.

\section{Preference satisfaction by conflict-free joint decisions}\label{sec:prev}
\subsection{Problem settings}\label{subsec:prob_settings}
In this section, we review the problem settings of the conflict-free probabilistic preference satisfaction proposed in Ref. \cite{shinkawa2022optimal}.
Suppose that two players, player A and B, have probabilistic preferences over $N$ options ($N \geq 2$). Let $A_i$ be player A's preference for option $i$ and $B_i$ be player B's preference for option $i$. Since $A_i$ and $B_i$ are probabilities, the following constraints are satisfied:
\begin{gather}
  A_{1}+A_{2}+\cdots+A_{N}=B_{1}+B_{2}+\cdots+B_{N}=1,\\
  A_{i} \geq 0, \quad B_{i} \geq 0 \quad(i=1,2, \cdots, N).
\end{gather}

Let $p_{i,j}$ be the probability of player A choosing option $i$ and player B choosing option $j$ as a result of collective decision-making. $p_{i,j}$ must satisfy the following conditions:
\begin{equation}
  \sum_{i, j} p_{i, j}=1, \quad p_{i, j} \geq 0.
\end{equation}
Then, we define the joint selection probability matrix $\boldsymbol{P}$ such that the element $(i,j)$ of $\boldsymbol{P}$ is $p_{i,j}$:
\begin{equation}
  \boldsymbol{P}=\left(\begin{array}{cccc}
    0 & p_{1,2} & \cdots & p_{1, N} \\
    p_{2,1} & 0 & \cdots & \vdots \\
    \vdots & \vdots & \ddots & \vdots \\
    p_{N, 1} & \cdots & \cdots & 0
    \end{array}\right).
\end{equation}
The diagonal elements are all zero because we deal with collective decision-making without choice conflict.

The property we demand for $\boldsymbol{P}$ is the following. 
The probability that player A can choose option $i$ as a result of $\boldsymbol{P}$ is obtained by summing over columns $j$:
\begin{equation}
  \pi_{A}(i)=\sum_{j} p_{i, j}.
\end{equation}
We call $\pi_{A}(i)$ the satisfied preference, and if this value is consistent with player A's original preference $A_i$, it means that the preference is satisfied for option $i$.
Similarly, we can obtain the satisfied preference for player B by summing over rows $i$:
\begin{equation}
  \pi_{B}(j)=\sum_{i} p_{i, j}.
\end{equation}
Our goal is to determine $p_{i,j}$ that make the satisfied preference as close as possible to the original preferences of both players for all options.
In other words, our objective is to find $p_{i,j}$ so that they realize
\begin{equation}
  \pi_{A}(i) \approx A_{i}, \quad \pi_{B}(j) \approx B_{j}
\end{equation}
for all $i = 1, 2, \ldots , N$ and $j = 1, 2, \ldots , N$.

Figure \ref{fig:problem_settings} schematically illustrates the problem settings. 
Players A and B have probabilistic preferences over options when $N=4$, where in this case,
\begin{eqnarray}
    A_1 = 0.1, \quad A_2 = 0.2, \quad A_3 = 0.3, \quad A_4 = 0.4, \\
    B_1 = 0.3, \quad B_2 = 0.2, \quad B_3 = 0.2, \quad B_4 = 0.3.
\end{eqnarray}
Then, an algorithm calculates a joint selection probability matrix $\boldsymbol{P}$. A decent algorithm should output a matrix in such a way that the satisfied preferences match the players' preferences. 
For example, if we take the sum of the second row of the joint selection probability matrix in the red shaded area, we should get a value close to player A's preference towards the second option.
Similarly, if we take the sum of the third column in the green shaded area, we should get a value close to player B's preference towards the third option.
An example of a joint selection probability matrix that satisfies the players' preferences perfectly is
\begin{equation}
    \boldsymbol{P} = \begin{pmatrix}
        0&0&0&0.1\\0&0&0.1&0.1\\0&0.2&0&0.1\\0.3&0&0.1&0
    \end{pmatrix}.
\end{equation}

\begin{figure*}[htp]
  \includegraphics[width=12.0cm]{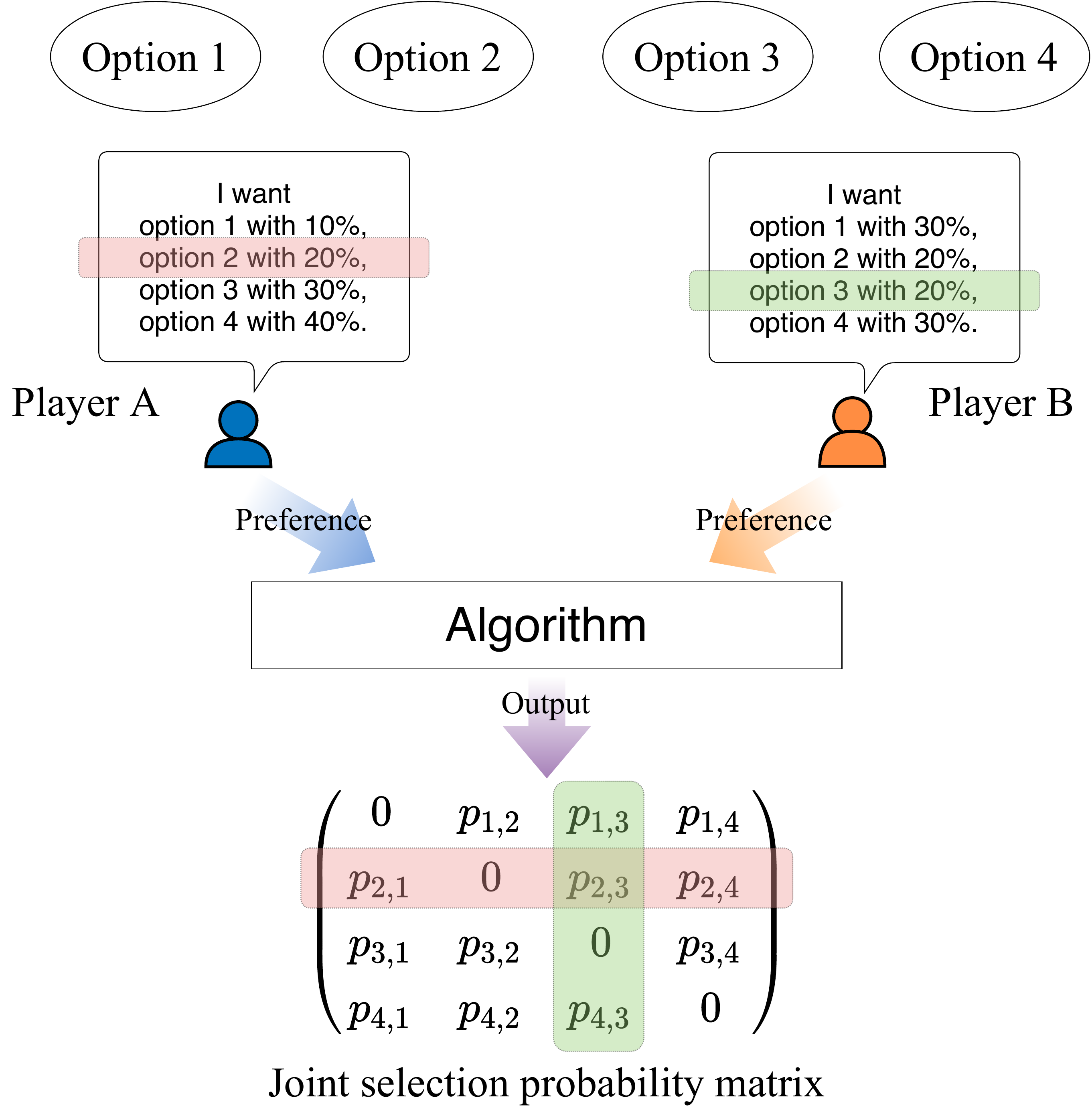}
  \caption{Problem settings. Two players have probabilistic preferences and an algorithm calculates a joint selection probability matrix that satisfies their preferences. The sum of each row should be close to the corresponding preference of player A, and the sum of each column should be close to the corresponding preference of player B.}
  \label{fig:problem_settings}
\end{figure*}

Finally, to quantify the degree of preference satisfaction, we define the degree of deviation between the satisfied preferences and the original preferences as the loss.
The loss $L$ is defined in a manner analogous to the $L_2$-norm as follows:
\begin{equation}
  L=\sum_{i}\left(\pi_{A}(i)-A_{i}\right)^{2}+\sum_{j}\left(\pi_{B}(j)-B_{j}\right)^{2},
  \label{eq:L2_norm_loss}
\end{equation}
which is composed of the sum of squares of the gap between the satisfied preference and the original preference.
The smaller the loss is, the more successfully the preferences are satisfied, and when the loss is zero, the player's preferences are perfectly satisfied.

\subsection{Optimal joint selection probability matrix}\label{subsec:optimal_thm}
In this section, we review principal theorems on the optimal joint selection probability matrix from \cite{shinkawa2022optimal} and clarify two problems associated with them.
We define a score called the popularity, which represents how much each option is favored by the players.
\begin{definition}
  \normalfont The popularity $S_i$ is defined as the sum of the preferences of player A and player B for option $i$.
    \begin{equation}
        S_i := A_i + B_i \quad (i=1, 2, \ldots, N).
    \end{equation}
    Since the preferences $A_i$ and $B_i$ are probabilities, it holds that
    \begin{equation}
        \sum_i S_i = \sum_i A_i + \sum_i B_i = 2.
    \end{equation}
\end{definition}
Two theorems have been found regarding the popularity $S_i$.
\begin{theorem}\label{thm:0loss_case}
  Assume that all the popularities $S_i$ are smaller than or equal to 1. Then, it is possible to construct a joint selection probability matrix that makes the loss $L$ equal to zero.
  \begin{equation}
      \forall i; S_i \leq 1 \Rightarrow L_{\normalfont\text{min}} = 0.
  \end{equation}
\end{theorem}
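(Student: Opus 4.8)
The plan is to recognize the claim as a feasibility statement for a transportation problem with forbidden cells, and to settle it via the max-flow--min-cut theorem. Concretely, I would ask whether there exists a nonnegative matrix $\boldsymbol{P}$ with prescribed row sums $\sum_j p_{i,j} = A_i$, prescribed column sums $\sum_i p_{i,j} = B_j$, and the extra constraint $p_{i,i}=0$. If such a $\boldsymbol{P}$ exists, then $\pi_A(i)=A_i$ and $\pi_B(j)=B_j$ for all $i,j$, so every term in \eqref{eq:L2_norm_loss} vanishes and $L=0$. Thus the entire theorem reduces to proving feasibility of this constrained system under the hypothesis $S_i \le 1$ for all $i$.

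First I would build a flow network: a source $s$, a sink $t$, one node $r_i$ per row and one node $c_j$ per column, with edges $s \to r_i$ of capacity $A_i$, edges $c_j \to t$ of capacity $B_j$, and edges $r_i \to c_j$ of infinite capacity for every $i \neq j$ (the forbidden diagonal cells $i=j$ simply carry no edge). A feasible $\boldsymbol{P}$ then corresponds to an $s$--$t$ flow of value $1$: since the total source capacity and the total sink capacity both equal $\sum_i A_i = \sum_j B_j = 1$, such a flow must saturate every source and sink edge, which forces $\sum_j p_{i,j} = A_i$ and $\sum_i p_{i,j} = B_j$. It therefore suffices to show that the maximum flow equals $1$, and by max-flow--min-cut this is equivalent to showing that every $s$--$t$ cut has capacity at least $1$.

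The heart of the argument is the cut analysis, which I expect to be the main obstacle. Let $R$ be the set of rows and $C$ the set of columns placed on the source side of a given cut. A finite-capacity cut cannot leave any infinite edge $r_i \to c_j$ crossing from the source side to the sink side; because these edges exist for all $i \neq j$, finiteness forces either $R=\emptyset$, or $C$ to contain all columns, or the single-index configuration $R=\{k\}$ with $C = \{1,\dots,N\}\setminus\{k\}$. The first two cases give cut capacity at least $\sum_i A_i = 1$ or $\sum_j B_j = 1$ and are harmless. The delicate case is the single-index one, whose capacity I would compute as
\begin{equation}
  \sum_{i \neq k} A_i + \sum_{j \neq k} B_j = (1 - A_k) + (1 - B_k) = 2 - S_k.
\end{equation}
This is the only family of cuts that can fall below $1$, and it does so precisely when $S_k > 1$.

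Finally, invoking the hypothesis: if $S_i \le 1$ for every $i$, then $2 - S_k \ge 1$ for all $k$, so every finite cut has capacity at least $1$. Hence the minimum cut equals $1$, the maximum flow equals $1$, and a margin-respecting zero-diagonal matrix $\boldsymbol{P}$ exists, giving $L_{\text{min}} = 0$. I would note that the ordinary transportation feasibility condition $\sum_i A_i = \sum_j B_j$ holds automatically here, so it is the forbidden diagonal that introduces the genuine requirement $S_k \le 1$. As an alternative I could attempt a direct constructive proof, starting from the product matrix $p_{i,j} = A_i B_j$ and redistributing each diagonal mass $A_i B_i$ off-diagonal while preserving both margins, but the flow argument appears both shorter and more transparent, and it makes clear why $S_k \le 1$ is exactly the right threshold.
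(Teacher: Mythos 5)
Your max-flow argument is correct and complete: the cut enumeration (empty row set, all columns, or the single-index configuration $R=\{k\}$, $C=\{1,\dots,N\}\setminus\{k\}$ with capacity $2-S_k$) is exhaustive, and the hypothesis $S_i\le 1$ is exactly what keeps every finite cut at capacity at least $1$, so a margin-respecting zero-diagonal matrix exists and $L=0$ is attainable. This is, however, a genuinely different route from the one taken in the paper (more precisely, in Ref.~\cite{shinkawa2022optimal}, which this paper only reviews): there the theorem is proved constructively by a greedy procedure that repeatedly identifies the options with maximum and minimum popularity and fills in one row and column of $\boldsymbol{P}$ at a time, which is precisely the $\mathcal{O}(N^2)$ algorithm whose cost motivates the present paper. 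Your feasibility-via-min-cut argument is shorter and arguably more illuminating --- the same cut computation shows that $S_k\le 1$ is not merely sufficient but necessary (a cut of capacity $2-S_k<1$ certifies infeasibility, which is the ``only if'' direction underlying Theorem~\ref{thm:non0loss_case}), and it generalizes immediately to other forbidden-cell patterns. What it does not deliver by itself is an explicit matrix: to actually produce $\boldsymbol{P}$ you would still have to run a max-flow computation, so the constructive proof retains practical value for the cost analysis in Sec.~\ref{subsec:optimal_thm}, whereas your proof is the cleaner existence argument.
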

\begin{theorem}\label{thm:non0loss_case}
  If any value of $S_i$ is greater than 1, it is not possible to make the loss $L$ equal to zero.\\
  In a case when the $N$th option is the most popular, that is, $\max{\{S_i\}}=S_N>1$, the minimum loss is
  \begin{equation}\label{eq:minloss_greater_than_1}
      L_{\text{min}} = \frac{N}{2(N-1)}\cdot (S_N-1)^2.
  \end{equation}
  The following joint selection probability matrix is one of the matrices that minimize the loss.
  \begin{gather}
      \tilde{\boldsymbol{P}} =
          \begin{pmatrix}
              0&0&\cdots&0&A_1+\epsilon\\
              0&0&\cdots&0&A_2+\epsilon\\
              \vdots&\vdots&\ddots&\vdots&\vdots\\
              0&0&\cdots&0&A_{N-1}+\epsilon\\
              B_1+\epsilon&B_2+\epsilon&\cdots&B_{N-1}+\epsilon&0
          \end{pmatrix},\\
          \epsilon=\frac{S_N-1}{2(N-1)}.
  \end{gather}
\end{theorem}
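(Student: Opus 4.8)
The plan is to split the statement into three pieces: the impossibility of zero loss, a matching lower bound on $L$, and the verification that $\tilde{\boldsymbol{P}}$ attains it. The backbone of all three is the elementary inequality $\pi_A(i)+\pi_B(i)\le 1$, valid for every admissible $\boldsymbol{P}$. Indeed, $\pi_A(i)+\pi_B(i)=\sum_j p_{i,j}+\sum_k p_{k,i}$ collects exactly the entries of row $i$ and column $i$; these overlap only in the diagonal entry $p_{i,i}=0$, and together they form a subset of the entries of $\boldsymbol{P}$, so their sum is at most $\sum_{i,j}p_{i,j}=1$. For the impossibility claim, suppose $S_i>1$ for some $i$. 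Perfect satisfaction would force $\pi_A(i)=A_i$ and $\pi_B(i)=B_i$, hence $\pi_A(i)+\pi_B(i)=S_i>1$, contradicting the inequality; therefore $L>0$.

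For the lower bound I would introduce the deviations $\delta_A(i)=\pi_A(i)-A_i$ and $\delta_B(i)=\pi_B(i)-B_i$, so that $L=\sum_i \delta_A(i)^2+\sum_i\delta_B(i)^2$, and collect the conditions that any admissible $\boldsymbol{P}$ necessarily imposes on them. Normalization $\sum_{i,j}p_{i,j}=1$ together with $\sum_i A_i=\sum_i B_i=1$ gives $\sum_i\delta_A(i)=\sum_i\delta_B(i)=0$, while the inequality above applied at the top option $N$ gives $\delta_A(N)+\delta_B(N)\le 1-S_N<0$. I would then \emph{relax} the problem to minimizing $L$ subject only to these three linear conditions; since each is necessary, the resulting minimum is a genuine lower bound for the true one. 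The relaxed problem is a convex quadratic program that decouples: for a fixed value $\delta_A(N)=a$, the constraint $\sum_i\delta_A(i)=0$ forces $\sum_{i<N}\delta_A(i)=-a$, and minimizing $\sum_{i<N}\delta_A(i)^2$ under a fixed sum is achieved by the equal assignment $\delta_A(i)=-a/(N-1)$, giving $\sum_i\delta_A(i)^2=\tfrac{N}{N-1}a^2$; the same holds for B with $\delta_B(N)=b$. The problem thus collapses to minimizing $\tfrac{N}{N-1}(a^2+b^2)$ under $a+b\le 1-S_N$, whose optimum lies on the boundary at $a=b=-(S_N-1)/2$ and equals $\tfrac{N}{2(N-1)}(S_N-1)^2$.

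Finally I would confirm achievability by checking that $\tilde{\boldsymbol{P}}$ is admissible and realizes this value. Nonnegativity is immediate because $\epsilon=(S_N-1)/[2(N-1)]>0$ and $A_i,B_j\ge 0$; the diagonal vanishes by construction; and the entries sum to $(1-A_N)+(1-B_N)+2(N-1)\epsilon=2-S_N+(S_N-1)=1$, using $2(N-1)\epsilon=S_N-1$. Reading off the marginals of $\tilde{\boldsymbol{P}}$ yields $\delta_A(i)=\delta_B(i)=\epsilon$ for $i<N$ and $\delta_A(N)=\delta_B(N)=-(S_N-1)/2$, which coincide with the optimizer found above; substituting into $L$ reproduces $\tfrac{N}{2(N-1)}(S_N-1)^2$, so the lower bound is tight and $\tilde{\boldsymbol{P}}$ is a minimizer.

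I would expect the lower-bound step to be the main obstacle, since it requires identifying precisely which constraints of the original optimization over admissible matrices survive as binding constraints on the marginals. The clarifying observation is that $\sum_i S_i=2$ forces at most one popularity to exceed $1$, so the single inequality at option $N$ is the only one that can bind; this is what makes the relaxation simultaneously legitimate and tight, and it explains why $L_{\text{min}}$ depends on the $S_i$ only through $S_N$.
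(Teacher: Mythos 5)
Your proof is correct and complete. Note that this paper does not actually prove Theorem~\ref{thm:non0loss_case}: Sec.~\ref{subsec:optimal_thm} only \emph{reviews} it, citing Ref.~\cite{shinkawa2022optimal}, so there is no in-paper proof to compare against. Judged on its own, your three-part argument is sound. The key inequality $\pi_A(i)+\pi_B(i)\le 1$ (row $i$ and column $i$ meet only at the vanishing diagonal entry, and together form a subset of a matrix whose entries sum to $1$) correctly kills the possibility of $L=0$ when some $S_i>1$. The lower bound via relaxation is legitimate: the three conditions you retain ($\sum_i\delta_A(i)=\sum_i\delta_B(i)=0$ and $\delta_A(N)+\delta_B(N)\le 1-S_N$) are all necessary for any admissible $\boldsymbol{P}$, so the relaxed quadratic program's value $\frac{N}{2(N-1)}(S_N-1)^2$ is a genuine lower bound, and the equal-spreading step for the minimum of a sum of squares under a fixed-sum constraint is standard. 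Your verification that $\tilde{\boldsymbol{P}}$ is admissible (nonnegative since $\epsilon>0$, zero diagonal, entries summing to $2-S_N+2(N-1)\epsilon=1$) and that its deviations coincide with the relaxed optimizer closes the gap, so the bound is tight. Your closing observation that $\sum_i S_i=2$ forces at most one popularity to exceed $1$ correctly explains why only the single inequality at option $N$ can bind; it is not strictly needed for the validity of the relaxation (which only requires necessity of the retained constraints), but it is the right intuition for why the relaxation happens to be tight.
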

The existence of the optimal joint selection probability matrices and their specific construction methods were presented for both cases where the maximum popularity is less than or equal to one, and where it is greater than one.

However, there are two concerns that need to be resolved.
The first is the computational cost of constructing the optimal joint selection probability matrix in Theorem \ref{thm:0loss_case}. To fill in one row and column of the matrix, we need to:
\begin{enumerate}
  \item Determine the maximum and minimum values of the popularities.
  \item Fill in at most $N-1$ elements.
\end{enumerate}
Determining the maximum and minimum values requires a computational cost of $\mathcal{O}(N)$ each, and filling in at most $N-1$ elements requires a computational cost of $\mathcal{O}(N)$ because every time each element is filled we need one subtraction.
Therefore, it requires $\mathcal{O}(N^2)$ to fill all the rows and columns in the joint selection probability matrix.
Thus, when $N$ becomes huge, it is difficult to compute the optimal joint selection probability matrix.

The second concern, common to the construction of the optimal joint selection probability matrix in both Theorems \ref{thm:0loss_case} and \ref{thm:non0loss_case}, is the lack of confidentiality.
In both cases, constructing the optimal joint selection probability matrix requires the players' preferences $A_i$ and $B_i$. 
This means that the players must disclose their preferences to each other or a third party.
In the real world, the necessity of preference disclosure is undesirable if the players do not know each other or have no means of communication.
For example, few people feel like telling their preferences over sensitive matter to someone they do not know. Even if they do, they may not have a way to connect.

Here, we consider how to deal with these two problems. 
Practically, if we consider conflict-free collective decision-making, we do not necessarily need to calculate the values of the joint selection probability matrix explicitly. Instead, it will be enough if there is a sampling method that converges to that matrix over repeated draws.
In this study, we demonstrate two quantum sampling methods that converge to heuristic joint selection probability matrices with relatively small losses and analyze how each of them deals with the above problems. In particular, we show how quantum interference effects can realize conflict-free joint sampling while highly satisfying individual preference profiles and resolving the confidentiality issue. It should be emphasized that the physical processes, not computers, play the role of establishing conflict-free joint sampling while taking account of individual preferences. The computing cost is replaced by the physical nature of light.
We have not yet devised a sampling algorithm that always converges to the optimal joint selection probability matrix. 
However, as demonstrated later in Sec. \ref{sec:demo}, one of the proposed quantum samplings realizes almost comparable performances to the optimal cases under certain conditions.  

\section{Joint sampling methods through quantum interference}\label{sec:heuristics}
In this section, we propose the following two sampling methods, each of which converges to a joint selection probability matrix with relatively small loss $L$:
\begin{enumerate}[label=\Alph*.]
    \item Pure Hong-Ou-Mandel (Pure HOM)
    \item Orbital Angular Momentum Attenuation \\(OAM Attenuation)
\end{enumerate}
Both of them employ orbital angular momentum (OAM) of light, which is a degree-of-freedom that consists of a theoretically infinite number of states, and they are relatively easy to implement using basic equipment such as spatial light modulators and beam splitters \cite{amakasu2021conflict}.

As introduced in Sec. \ref{sec:prev}, there were two problems in constructing the optimal joint selection probability matrix: high computational cost and low confidentiality.
For each of the above sample methods, we analyze the following four features:
\begin{enumerate}
  \item Implementation
  \item Computational cost
  \item Confidentiality
  \item Joint selection probability matrix that it converges to
\end{enumerate}

\subsection{Pure Hong-Ou-Mandel}
\subsubsection{Implementation}
The method we call ``Pure Hong-Ou-Mandel (Pure HOM)'' employs a system based on the Hong-Ou-Mandel effect involving quantum interference of orbital angular momentum of photons \cite{hong1987measurement, bouchard2020two}.
Figure \ref{fig:mod_hom} schematically illustrates the quantum system we use to make conflict-free joint decisions.

\begin{figure}[htp]
  \includegraphics[width=7.0cm]{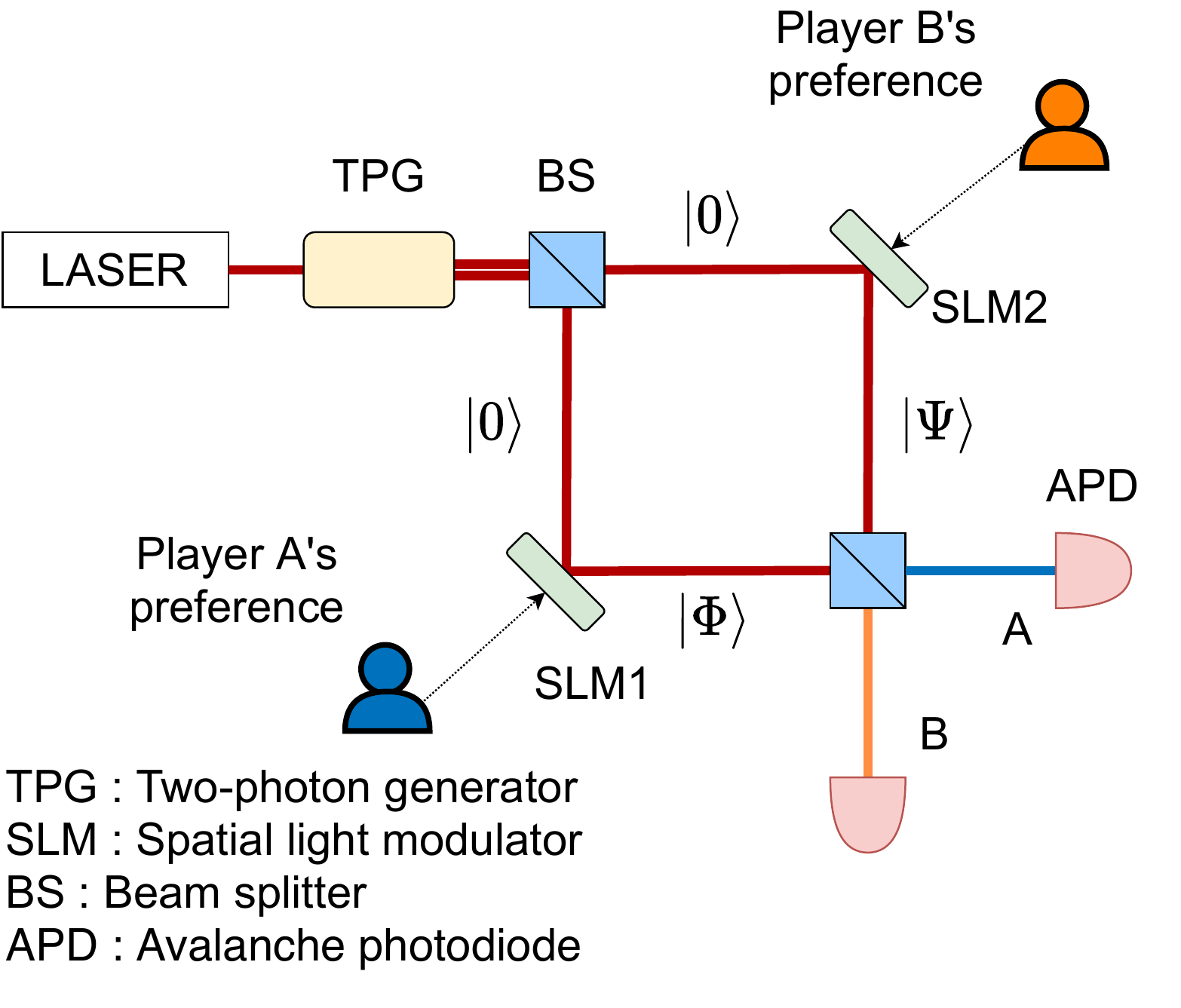}
  \caption{Implementation of Pure HOM. Each player adjusts one of the SLMs using only their own preference profile. Photodetectors are placed immediately after the output of the HOM effect. The detected OAM number is mapped to the index of the option.}
  \label{fig:mod_hom}
\end{figure}

First, a photon pair is created by a two-photon generator, and split into two paths by a beam splitter. At this point, each photon does not carry orbital angular momentum. Orbital angular momentum states can be induced by a spatial light modulator (SLM), which displays computer-generated holograms on its surface \cite{yao2011orbital, wang2012terabit}. 

In the proposed system, player A controls SLM1 to encode his/her probabilistic preference to a photon. Since, SLMs can manipulate both amplitude and phase terms of OAM states, the resulting photon can be described by
\begin{equation}\label{eq:phom_phi}
    |\Phi\rangle=\sum_{k=1}^{K} a_k e^{i \phi_{k}}|+k\rangle, \quad \sum_k a_k^2 = 1.
\end{equation}
Similarly, player B controls SLM2 to encode his/her preference to the other photon:
\begin{equation}\label{eq:phom_psi}
    |\Psi \rangle=\sum_{k=1}^{K} b_k e^{i \psi_{k}}|-k\rangle, \quad \sum_k b_k^2 = 1.
\end{equation}

Then, the photon pair is simultaneously injected into a beam splitter, where the Hong-Ou-Mandel effect happens. 
The output OAM states are the tensor product of the OAM states of the two injected photons:
\begin{align}\label{eq:mod_hom_output}
  \begin{split}
    |\Phi, \Psi\rangle= \sum_{k=1}^{K} \frac{i}{2} &a_{k} b_{k} e^{i\left(\phi_{k}+\psi_{k}\right)}|+k\rangle_{A A}^{2} \\
    +\sum_{k_{1}<k_{2}} \frac{i}{2}&\left(a_{k_{1}} b_{k_{2}} e^{i\left(\phi_{k_{1}}+\psi_{k_{2}}\right)}\right.\\
    &\left.+a_{k_{2}} b_{k_{1}} e^{i\left(\psi_{k_{1}}+\phi_{k_{2}}\right)}\right) \left|+k_{1},+k_{2}\right\rangle_{A A}\\
    +\frac{1}{2} \sum_{k_{1}=1}^{K} \sum_{k_{2}=1}^{K}&\left(a_{k_{1}} b_{k_{2}} e^{i\left(\phi_{k_{1}}+\psi_{k_{2}}\right)}\right.\\
    &\left.-a_{k_{2}} b_{k_{1}} e^{i\left(\psi_{k_{1}}+\phi_{k_{2}}\right)}\right)\left|+k_{1},-k_{2}\right\rangle_{A B} \\
    +\sum_{k=1}^{K} \frac{i}{2} &a_{k} b_{k} e^{i\left(\phi_{k}+\psi_{k}\right)}|-k\rangle_{B B}^{2} \\
    +\sum_{k_{1}<k_{2}} \frac{i}{2}&\left(a_{k_{1}} b_{k_{2}} e^{i\left(\phi_{k_{1}}+\psi_{k_{2}}\right)}\right.\\
    &\left.+a_{k_{2}} b_{k_{1}} e^{i\left(\psi_{k_{1}}+\phi_{k_{2}}\right)}\right) \left|-k_{1},-k_{2}\right\rangle_{B B}.
  \end{split}
\end{align}

As shown in Fig. \ref{fig:mod_hom}, photodetectors are placed immediately after the beam splitter.
The probability of detecting $|+i\rangle$ on side A and $|-j\rangle$ on side B is
\begin{align}\label{eq:phom_rate}
  \begin{split}
  r_{i,j} &= \left|\frac{1}{2}\left(a_{i} b_{j} e^{i\left(\phi_{i}+\psi_{j}\right)}-a_{j} b_{i} e^{i\left(\psi_{i}+\phi_{j}\right)}\right)\right|^{2}\\
  &=\frac{1}{4}\left(a_{i}^{2} b_{j}^{2}+a_{j}^{2} b_{i}^{2}-2 a_{i} a_{j} b_{i} b_{j} \cos \left(\theta_{i}-\theta_{j}\right)\right),
  \end{split}
\end{align}
\begin{equation}
    \theta_k = \frac{\phi_k-\psi_k}{2}.
\end{equation}
Substituting $i=j$ verifies that the same absolute values of the OAM number will never be observed on sides A and B, whatever parameters $a_k, b_k, \phi_k, \psi_k$ the players use for the input states.

Now, in order to realize joint sampling, the observed OAM number is mapped to the index of choice.
For example, if the OAM of $+1$ is observed on side A, player A will select the first option, and if $|-2\rangle$ is observed on side B, player B will select the second option.
Selection conflicts between the two players will never happen under this rule, because the absolute values of the observed OAM number on sides A and B always differ thanks to the Hong-Ou-Mandel effect.
Therefore, with Pure HOM realized by Fig. \ref{fig:mod_hom}, collective decision-making without selection conflicts can be achieved.

Note that there are cases where two photons come out on the same side. For example, according to Eq. \eqref{eq:mod_hom_output}, the probability of both photons, whose OAM states are respectively $|+k_1\rangle$ and $|+k_2\rangle$, coming out on side A is
\begin{equation}
    \left|\frac{i}{2}\left(a_{k_{1}} b_{k_{2}} e^{i\left(\phi_{k_{1}}+\psi_{k_{2}}\right)}+a_{k_{2}} b_{k_{1}} e^{i\left(\psi_{k_{1}}+\phi_{k_{2}}\right)}\right)\right|^{2} > 0.
\end{equation}
In such cases, we discard the photon pair and regenerate a new one.

\subsubsection{Computational cost}
As confirmed in Eq. \eqref{eq:phom_rate}, the output OAM states of the Hong-Ou-Mandel effect depend on the input parameters $a_k, b_k, \phi_k$, and $\psi_k$.
Since the players control these parameters, the computational cost depends on how this control is done.
How exactly they should be controlled is discussed in Sec. \ref{subsubsec:phom_jm}.

\subsubsection{Confidentiality}
In this study, we assume a situation where each player adjusts the SLM using only their own preference; that is, player A cannot take $B_i$ into account to determine $a_k$ and $\phi_k$, and vice versa.
Under this assumption, neither player is required to disclose their probabilistic preference to the other or to a third party.
Even though player A does not know player B's preference, he/she must make an assumption to determine the input parameter $a_i$. For now, we let player A assume that player B has the same preference and amplitude terms. This is a reasonable assumption when preferences are similar, but does not hold in general.
Specifically, using the amplitude term $a_i$, he/she can compute the joint selection probability matrix from which the output OAM states are sampled from, so he/she can optimize $a_i$ numerically by viewing the loss of the joint selection probability matrix as a function of $a_i$.
The optimization of the amplitude terms are conducted numerically using the SLSQP optimizer.

Moreover, the Hong-Ou-Mandel effect allows them to avoid conflicts without having to inform the other party of which option they have selected. Therefore, with Pure HOM, the players' preferences and their choices are highly secure.
Also, there is no need to trust a third party since the whole procedures can be carried out between the two players. As we will see later, this property is unique to Pure HOM and cannot be achieved by the other quantum sampling method we propose in this paper.

\subsubsection{Joint selection probability matrix}\label{subsubsec:phom_jm}
By using Eq. \eqref{eq:phom_rate}, we can calculate the probability of player A selecting option $i$ and player B selecting option $j$ as
\begin{gather}
  p_{i,j} = \frac{r_{i,j}}{\sum r_{i,j}},\label{eq:r_to_p}\\
  r_{i,j} =\frac{1}{4}\left(a_{i}^{2} b_{j}^{2}+a_{j}^{2} b_{i}^{2}-2 a_{i} a_{j} b_{i} b_{j} \cos \left(\theta_{i}-\theta_{j}\right)\right).\label{eq:rij}
\end{gather}
This joint probability depends on the input parameters, and this section analyzes the characteristics of the joint selection probability matrix that consists of $p_{i,j}$ and discusses what parameters the players should use.
Note that we assume that each player controls the SLM using only his/her own preference.

We can confirm from Eqs. \eqref{eq:r_to_p}, \eqref{eq:rij} that the joint selection probability matrix is symmetric for any input parameters $a_k, b_k, \phi_k$, and $\psi_k$.
This means that the satisfied preferences of players A and B on option $i$, that is, $\pi_A(i)$ and $\pi_B(i)$, are always equal.
Thus, when players A and B have similar preferences, Pure HOM is likely to result in a low loss, but it is expected to work very poorly when they have reversed preferences.

Regarding situations where the players have the same preferences, the following theorem holds.
\begin{theorem}\label{thm:hom_3players}
  When there are three options, if the players have the same preferences and all the popularities $S_i$ are less than 1, by setting the amplitude terms $a_k, b_k$ as follows:
  \begin{align}\label{eq:phom_amps}
    \begin{split}
      a_{1}^{2}: a_{2}^{2}: a_{3}^{2}& = b_{1}^{2}: b_{2}^{2}: b_{3}^{2}=\\
      \frac{\sin ^{2} \frac{\theta_{2}-\theta_{3}}{2}}{1-2 A_{1}}: &\frac{\sin ^{2} \frac{\theta_{3}-\theta_{1}}{2}}{1-2 A_{2}}: \frac{\sin ^{2} \frac{\theta_{1}-\theta_{2}}{2}}{1-2 A_{3}},
    \end{split}
  \end{align}
  the resulting joint selection probability matrix achieves the theoretical minimum loss.
\end{theorem}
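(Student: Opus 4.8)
The plan is to show that the prescribed amplitudes force each satisfied preference to equal the corresponding original preference, so that the loss reaches the value $L_{\mathrm{min}}=0$ which Theorem~\ref{thm:0loss_case} guarantees is attainable whenever every $S_i<1$. The first step is to simplify the detection rates. Because the players share the same preferences, the two amplitude vectors are proportional; combined with the normalizations $\sum_k a_k^2=\sum_k b_k^2=1$ this forces $a_k=b_k$ for all $k$. Substituting $b_k=a_k$ into Eq.~\eqref{eq:rij} and applying $1-\cos\alpha=2\sin^2(\alpha/2)$ reduces the rate to the product form
\begin{equation}
  r_{i,j}=a_i^2 a_j^2\,\sin^2\frac{\theta_i-\theta_j}{2}.
\end{equation}
This shows $r_{i,i}=0$ (consistent with the zero diagonal) and $r_{i,j}=r_{j,i}$, so the induced matrix $\boldsymbol{P}$ is symmetric and $\pi_A(i)=\pi_B(i)$ automatically. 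It therefore suffices to arrange $\pi_A(i)=A_i$ for each $i$, since symmetry then yields $\pi_B(i)=B_i=A_i$ and the loss vanishes.

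Next I would reduce the conditions $\pi_A(i)=A_i$ to a three-variable linear system. Writing $x=r_{1,2}$, $y=r_{1,3}$, $z=r_{2,3}$ and $T=x+y+z$, the normalization is $\sum_{i,j}r_{i,j}=2T$, so the three conditions become $x+y=2A_1T$, $x+z=2A_2T$, and $y+z=2A_3T$. Subtracting each from $T=x+y+z$ solves the system at once and gives the required ratio
\begin{equation}
  r_{1,2}:r_{1,3}:r_{2,3}=(1-2A_3):(1-2A_2):(1-2A_1).
\end{equation}
The identity $\sum_i A_i=1$ makes the system consistent, and the hypothesis $S_i=2A_i<1$ guarantees each right-hand entry is strictly positive, so such a rate profile is admissible.

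It remains to verify that the amplitudes of Eq.~\eqref{eq:phom_amps} realize exactly this ratio. Setting $s_i:=\sin^2\frac{\theta_{i+1}-\theta_{i+2}}{2}$ with indices read cyclically in $\{1,2,3\}$, the theorem prescribes $a_i^2=c\,s_i/(1-2A_i)$ for a normalization constant $c$. Inserting this into the product form, and noting that $r_{i,j}$ carries the weight $\sin^2\frac{\theta_i-\theta_j}{2}$, each rate collapses to the common factor $c^2 s_1 s_2 s_3$ divided by the product of the two relevant factors of the form $1-2A_k$; comparing the three expressions reproduces $(1-2A_3):(1-2A_2):(1-2A_1)$ precisely. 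The main obstacle is conceptual rather than computational: one must recognize that symmetry collapses the problem to three scalar equations with a unique positive solution up to scale, and that the $\sin^2$ weights in Eq.~\eqref{eq:phom_amps} are engineered so that the otherwise arbitrary phases $\theta_k$ cancel out of the ratios. A minor caveat to record is that the phases must be pairwise distinct, so that each $s_i>0$ and the amplitudes are well defined and normalizable.
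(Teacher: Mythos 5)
Your proof is correct and follows essentially the same route as the paper's: both reduce the rates to $r_{i,j}=a_i^2a_j^2\sin^2\frac{\theta_i-\theta_j}{2}$ under $a_k=b_k$ and then verify by direct computation that the row sums are proportional to $A_i$, so that normalization gives $\pi_A(i)=A_i$ and $L=0$. Your intermediate step of solving the linear system to expose the target ratio $r_{1,2}:r_{1,3}:r_{2,3}=(1-2A_3):(1-2A_2):(1-2A_1)$, and your caveat that the phases $\theta_k$ must be pairwise distinct so each $\sin^2$ weight is nonzero, are worthwhile additions but do not change the substance of the argument.
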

\begin{proof}
  \normalfont
  When $a_i^2=b_i^2$, Eq. \eqref{eq:rij} can be rewritten as
  \begin{equation}
    r_{i,j} = a_i^2a_j^2\sin^2{\frac{\theta_i-\theta_j}{2}}.
  \end{equation}
  Now, let $\hat{\pi}_A(i)$ be the ``unnormalized satisfied preference,'' which is defined by
  \begin{equation}
    \hat{\pi}_A(i) = \sum_{j=1}^N r_{i,j}.
  \end{equation}
  Also, the unnormalized satisfied preference for player B is defined by
  \begin{equation}
    \hat{\pi}_B(j) = \sum_{i=1}^N r_{i,j}.
  \end{equation}
  With the relation Eq. \eqref{eq:r_to_p}, it follows that
  \begin{equation}
    \pi_A(i) = \frac{\hat{\pi}_A(i)}{\sum_{i=1}^N \hat{\pi}_A(i)}, \quad \pi_B(j) = \frac{\hat{\pi}_B(j)}{\sum_{i=1}^N \hat{\pi}_B(j)}.
  \end{equation}
  Using the amplitude terms described in Eq. \eqref{eq:phom_amps}, we get
  \begin{gather}
    \hat{\pi}_A(1) = a_1^2a_2^2\sin^2{\frac{\theta_1-\theta_2}{2}}+a_1^2a_3^2\sin^2{\frac{\theta_1-\theta_3}{2}}\\
    = T(1-2A_3)a_1^2a_2^2a_3^2 + T(1-2A_2)a_1^2a_2^2a_3^2\\
    = 2Ta_1^2a_2^2a_3^2(1-A_2-A_3)\\
    = 2Ta_1^2a_2^2a_3^2\times A_1.
  \end{gather}
  Here,
  \begin{equation}
    T = \frac{\sin ^{2} \frac{\theta_{2}-\theta_{3}}{2}}{1-2 A_{1}} + \frac{\sin ^{2} \frac{\theta_{3}-\theta_{1}}{2}}{1-2 A_{2}} + \frac{\sin ^{2} \frac{\theta_{1}-\theta_{2}}{2}}{1-2 A_{3}}.
  \end{equation}
  Similarly, it follows that
  \begin{gather}
    \hat{\pi}_A(2) = 2Ta_1^2a_2^2a_3^2\times A_2, \quad \hat{\pi}_A(3) = 2Ta_1^2a_2^2a_3^2\times A_3.
  \end{gather}
  Normalizing $\hat{\pi}_A(i)$ leads to
  \begin{equation}
    \pi_A(i) = A_i.
  \end{equation}
  Therefore, the loss for player A is zero, and the same argument can apply to player B, which results in
  \begin{equation}
    L = 0.
  \end{equation}
  \qed
\end{proof}
This property cannot be realized by OAM Attenuation, that will be explained in the subsequent Sec. \ref{subsec:oam_attenuation} and other simple sampling methods shown in Sec. \ref{sec:demo}, highlighting the importance of Pure HOM.
Furthermore, in Sec. \ref{sec:demo}, we present the results of numerical simulations that show the near-optimality of Pure HOM under less restricted conditions, including when the number of options is more than three and when one of the popularities $S_i$ is greater than 1.
There, Pure HOM is found to be quite effective as long as the players have the same preferences.

\subsection{Orbital Angular Momentum Attenuation}\label{subsec:oam_attenuation}
\subsubsection{Implementation}\label{subsubsec:sr_imp}
This method, which we call ``Orbital Angular Momentum Attenuation (OAM Attenuation),'' also utilizes quantum interference of orbital angular momentum, and the quantum system to be considered is proposed by Amakasu {\it{et al.}} \cite{amakasu2021conflict}, but for a different purpose.
As with the case in Pure HOM, we perform probabilistic decision-making by mapping the observed OAM number to the index of choice.
We use a system described in Fig. \ref{fig:two_systems}, and the whole system works in the following way.

\begin{figure*}[htp]
  \includegraphics[width=13.0cm]{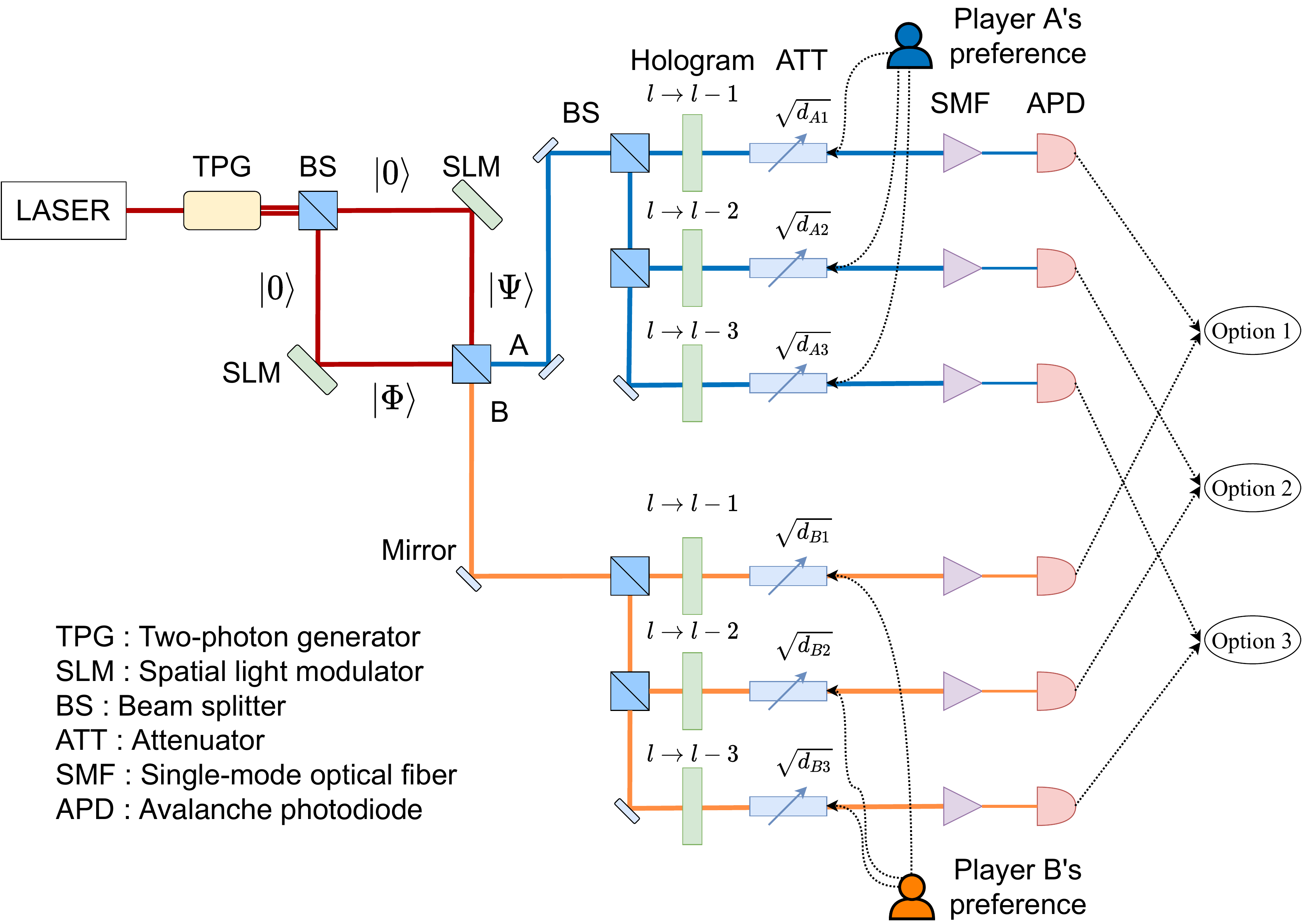}
  \caption{
  Implementation of OAM Attenuation. 
  Each player embeds their probabilistic preference to the attenuators. 
  The detected OAM number is mapped to the index of the option.
  }
  \label{fig:two_systems}
\end{figure*}

After a photon pair is generated by a two-photon generator, it is split into two paths by a beam splitter, and OAM states of the photons are adjusted by SLMs as follows:
\begin{equation}\label{eq:source_input_states}
    |\Phi\rangle=\frac{1}{\sqrt{K}} \sum_{k=1}^{K} e^{i \phi_{k}}|+k\rangle, \quad|\Psi\rangle=\frac{1}{\sqrt{K}} \sum_{k=1}^{K} e^{i \psi_{k}}|-k\rangle.
\end{equation}
In \cite{amakasu2021conflict}, only phase modulations are applied, and we follow this setup in OAM Attenuation. This corresponds with the situation where we set
\begin{equation}
    a_k = b_k = \frac{1}{\sqrt{K}}
\end{equation}
in Eqs. \eqref{eq:phom_phi} and \eqref{eq:phom_psi}.
As a result, the probability of detecting $|+i\rangle$ on side A and $|-j\rangle$ on side B is
\begin{gather}
  r_{i,j} = \frac{1}{K^{2}} \sin ^{2}\left(\theta_{i}-\theta_{j}\right),\\
  \theta_k = \frac{\phi_k-\psi_k}{2}.
\end{gather}
Here again, we can confirm that the observation probability is always zero when $i=j$. Also, there are cases where two photons come to the same side. Then, we discard the photon pair and regenerate a new one.

After the Hong-Ou-Mandel effect, each photon is sent to an attenuation system owned by each player, as shown in Fig. \ref{fig:two_systems}. There, the photon is divided into $N$ paths by beam splitters, and in each path, a phase factor of $e^{il_{HG}\theta}$ is added to the state $|+l\rangle$, which changes the state to $|+l+l_{HG}\rangle$, by a hologram. 
Then, the probability amplitude is reduced by an attenuator, and only an $l=0$ photon is filtered through by a single-mode optical fiber. If the photon is detected by a photodetector placed in the same line as the hologram with the phase factor $e^{il_{HG}\theta}$, the OAM of the incoming photon is revealed to be $l=-l_{HG}$.

If there are $N$ options, holograms whose phase factors are respectively $-1, -2, \ldots, -N$, are used. For example, when the number of options $N$ is three, three holograms that transform $|l\rangle$ to $|l-1\rangle$, $|l-2\rangle$ and $|l-3\rangle$, respectively, are placed.

Now, in the $i$-th path, an attenuator with the attenuation rate $\sqrt{d_i} \quad (0 \leq d_i \leq 1)$ is placed after the hologram and before the single-mode optical fiber.
In the end, if the input photon of the attenuation system carries the same probability amplitude for each OAM, the detection rate of each OAM is denoted by
\begin{equation}
  \frac{d_i}{N^2}.
\end{equation}
The sum over all $i$ is not equal to unity because some photons are lost by the attenuators and the fibers.

The detected OAM number is mapped to the index of the option. As a result, the selection probability of the $i$-th machine is $d_i$ as a result of the attenuation system.

In this study, the part of the system that generates photon pairs using the Hong-Ou-Mandel effect is considered as the ``source,'' and we assume that this is controlled by a neutral third party with no knowledge of the players' preferences. If only one of the players had the authority to manipulate the source, they would be able to adjust $|\Phi\rangle$ and $|\Psi\rangle$ so that some selection pair could happen more at the source than other pairs.
The generated pair of photons are then injected into the system shown in Fig. \ref{fig:two_systems}, each controlled by player A and B, respectively.

Each player embeds information about his/her own probabilistic preference in the attenuators in his/her system. Specifically,
player A adjusts the attenuation rates of the attenuators in his/her system to
\begin{equation}
  \sqrt{d_{Ai}} = \sqrt{A_i}.
\end{equation}
Similarly, player B sets
\begin{equation}
  \sqrt{d_{Bi}} = \sqrt{B_i}.
\end{equation}
Each player chooses the option whose index is equal to the OAM number they have observed.

\subsubsection{Computational cost}\label{subsubsec:sr_cost}
One remarkable aspect of this system is that it requires zero computational cost since the player only needs to set the attenuation rates. As we will see later, the computational cost of independently computing the effective joint selection probability matrix that this sampling method naturally converges to would be $O(N^2)$.
However, we need no computational cost if we need only to sample from it with the quantum system.
Note that, even though the players do not need to calculate anything, they sometimes have to repeat the observation process because the photon pairs are lost probabilistically due to two reasons. 
The first reason is the loss at the source level. We ignore cases where two photons come out on the same side of the beam splitter at the source. The second reason is the absorption by the attenuators and the filtering through the optical fiber. 
The players can detect only photons that pass through the attenuators and the single-mode optical fiber.

\subsubsection{Confidentiality}
Since each player sets only his/her own attenuators, there is no need to disclose their preferences.
In addition, thanks to the HOM effect, selection conflicts do not occur in principle, and thus conflict avoidance is achieved without each party having to communicate their own selection to the other.
However, they need to have a way to make sure that both of them detected a photon because each of them alone cannot discriminate the following two situations.
\begin{itemize}
    \item Both players detected a photon, which means that the joint selection is valid.
    \item One of the players detected a photon, but the other photon is absorbed by the attenuator, which means that the joint selection is invalid.
\end{itemize}
If they have a direct connection, they can just ignore the cases where one of them does not detect a photon. 
If they do not have a direct connection, a third person or system that executes the joint decision only when both players send their choices is needed.
Another pitfall is that the final value of the joint selection probability matrix to which this sampling method converges to depends not only on the players' preferences, but also on the phase settings of the source part, as we will examine in the next section.
Thus, the player has to trust that the third party managing the source determines the phase from a fair distribution.

\subsubsection{Joint selection probability matrix}
When there are $N$ options and the input states of the source are set as in Eq. \eqref{eq:source_input_states}, the general formula of the elements $p_{i,j}$ in the joint selection probability matrix is denoted by
\begin{gather}
  p_{i,j} = \frac{r_{i,j}}{\sum r_{i,j}},\\
  r_{i,j} = \frac{1}{K^{2}} \sin ^{2}\left(\theta_{i}-\theta_{j}\right) \times A_i \times B_j,\\
  \theta_k = \frac{\phi_k-\psi_k}{2}.
\end{gather}
Although the values of the element in the joint selection probability matrix depend on the source's phase settings $\theta_i$ as described above, this sampling method converges to the following joint selection probabilities over many repetitions, assuming that the source determines the value of $\sin^2(\theta_i-\theta_j)$ uniformly at random:
\begin{gather}
  p_{i,j} = \frac{\tilde{r}_{i,j}}{\sum \tilde{r}_{i,j}},\label{eq:sr_p}\\
  \tilde{r}_{i,j} = A_i \times B_j.\label{eq:sr_r}
\end{gather}
This is actually the same joint probability matrix that is introduced as ``Simultaneous Renormalization'' in \cite{shinkawa2022optimal}.
If this joint selection probability matrix were to be computed on a computer, it would require a computational cost of $O(N^2)$. 
However, if the sampling is carried out by the quantum system described in Fig. \ref{fig:two_systems}, as explained in Sec. \ref{subsubsec:sr_cost}, the cost is 0.
Although the quantum system guarantees low computational cost and high confidentiality, OAM Attenuation generally cannot achieve the optimal loss, as will be discussed in the following section.

\section{Performance comparison}\label{sec:demo}
\subsection{Objectives}
In this section, we compare the losses under various preference settings to clarify the extent to which the joint selection probability matrices of the proposed methods approximate the optimal joint selection probability matrix.
A total of five models were compared. 
The first two models were introduced in Sec. \ref{sec:heuristics}, that is, Pure HOM and OAM Attenuation.
The other models are Random Order, Uniform Random and the optimal joint selection probability matrix. 
Details of Random Order are explained in the subsequent Sec. \ref{subsec:random_order}. 
Uniform Random is a method that samples cases with no selection conflicts with equal probabilities.
As Eqs. \eqref{eq:sr_p}, \eqref{eq:sr_r} imply, the joint selection probability matrix that OAM Attenuation converges to is the same as the one introduced as ``Simultaneous Renormalization'' in Ref. \cite{shinkawa2022optimal}, albeit without considering the properties of a physical implementation, such as the absorption of photons.
Since we have already compared Uniform Random, Random Order, Simultaneous Renormalization, and the optimal joint selection probability matrix in the previous study, this section focuses primarily on the performance of Pure HOM. To be clear about the contribution of this paper to OAM Attenuation, it is the analysis of the implementation method, computational cost, confidentiality, and properties of the joint selection probability matrix it converges to, as presented in Sec. \ref{subsec:oam_attenuation}, and this section uses it to compare with Pure HOM. 

\subsection{Random Order}\label{subsec:random_order}
One straight-forward and classical method to realize conflict-free decision-making is what we call ``Random Order.'' 
It is similar to the random priority mechanism proposed by Abdulkadiro{\u{g}}lu {\it{et al.}}  \cite{abdulkadirouglu1998random}, except that it takes into account probabilistic preferences. 
A big advantage of it is its simplicity.
First, players decide uniformly at random in which order they choose options. 
Then, based on the order, the first player makes a probabilistic decision according to his/her preference. 
For there to be no choice conflict, the first player notifies the second player which option has already been chosen.
The second player then configures the preference of the already-selected option to zero, and normalizes his/her preference so that the sum of the remaining probabilities becomes 1.
After that, the second player executes a probabilistic choice based on his/her preference.
Note that this method can sometimes fail when there is a zero preference. For example, in an extreme case, when the number of options $N$ is two, and the players' preferences are $A_1=A_2=0.5, B_1=1$, and $B_2=0$. 
If they decide player A to be the first and he/she chooses option 1, player B will have no options with a positive preference to select, and the algorithm stops.

This simple algorithm can also reduce the problems with the optimal matrix.
Regarding the first problem, that is, the computational cost, the first player does not need to make any calculations. 
However, the second player needs to normalize his/her preference after setting the preference of the already selected option to 0, which requires the computational cost of $O(N)$.

As for the second problem, neither player is required to directly disclose their probabilistic preference to the other, but the first player has to tell the second player which option he or she has chosen to avoid decision conflict. This will indirectly expose their preference profiles over many trials.
In addition, it is an undesirable property if the two players do not trust each other or have limited means of communication.

Finally, iterating this sampling method leads to convergence to a certain joint selection probability matrix, and the general formula of its elements $p_{i,j}$ can be expressed as follows:
\begin{equation}
  p_{i,j} = \frac{1}{2}\left(A_i \times \frac{B_j}{1-B_i} + B_j \times \frac{A_i}{1-A_j}\right).
\end{equation}
The first term corresponds to the probability of player A selecting option $i$ and player B selecting option $j$ under the condition that player A draws first, and the second term corresponds to the same probability under the condition that player B draws first.
We can easily confirm that these joint probabilities give the optimal loss when the number of options $N$ is two.
However, in more general cases where $N \geq 3$, the joint selection probability matrix of Random Order cannot achieve the optimal loss, except in special circumstances, such as when all the preferences are equal.

\begin{figure*}[htbp] 
  \includegraphics[width=13.0cm]{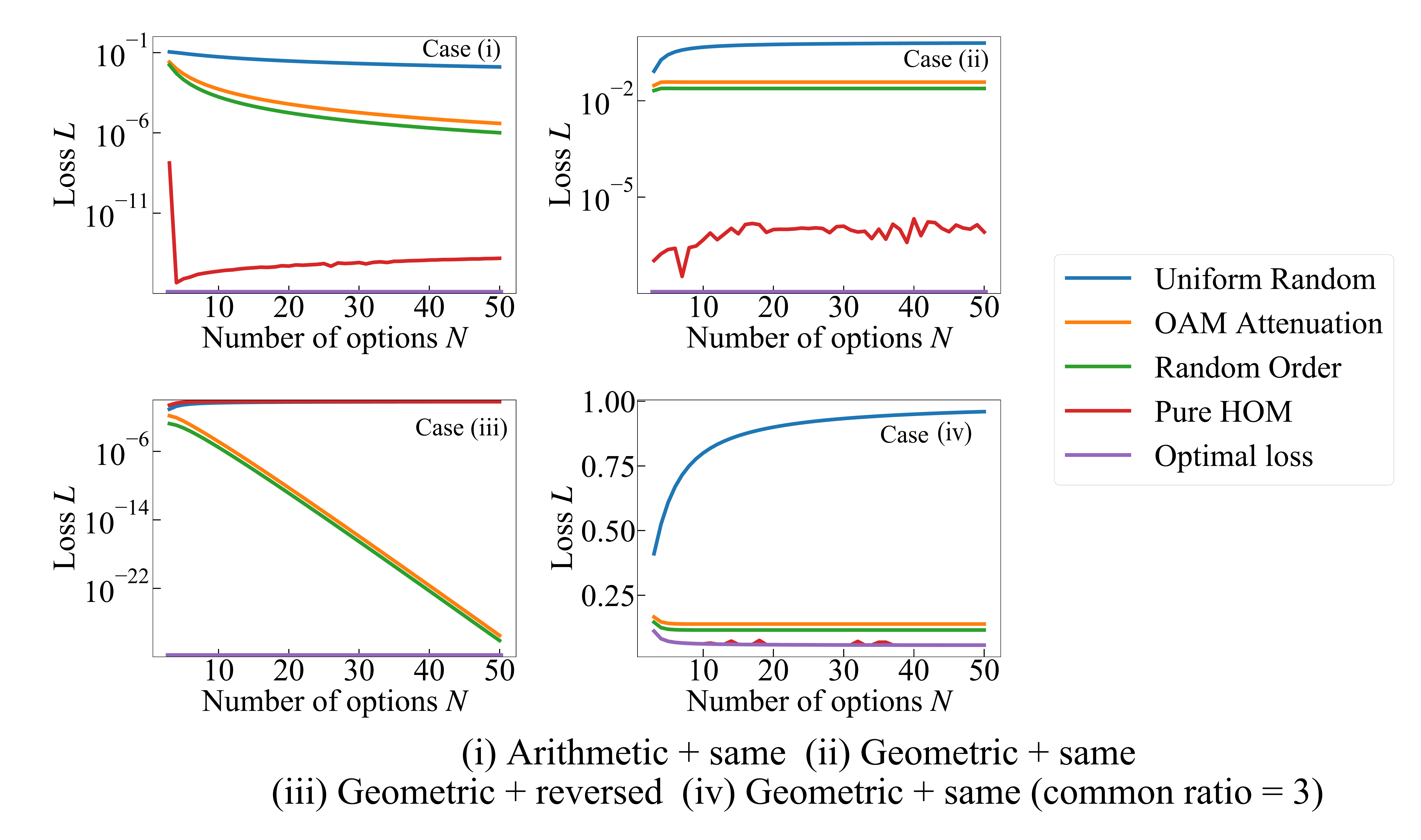}
  \caption{
  Loss comparison. $Y$ axes are log scale for cases (i), (ii) and (iii). 
  Lines for the optimal satisfaction matrix overlap with $X$ axes for these cases to show the loss is zero.
  (i)~arithmetic progression + same preference, (ii)~geometric progression with common ratio 2 + same preference, (iii)~geometric progression with common ratio 2 + reversed preference,
  (iv)~geometric progression with common ratio 3 + same preference. For cases (i)--(iii), the minimum loss is zero, while for case (iv), it is greater than zero.}
  \label{fig:main_result}
\end{figure*}

\subsection{Comparison of the heuristics}\label{subsec:comp}
The preference settings used in this section are the same as the ones used in Ref. \cite{shinkawa2022optimal}. 
Considering 2-player $N$-option ($N=3, 4, \ldots, 50$) situations, we compare the loss $L$ of Uniform Random, OAM Attenuation, Random Order, Pure HOM, together with the optimal joint selection probability matrix under the following four preference settings. 
\begin{enumerate}[label=(\roman*)]
  \item Arithmetic progression and same preference.\\
  \begin{align}
    \begin{split}
    A_1: A_2:\cdots : A_N = B_1:B_2:\cdots :B_N \\
    = (1:2:\cdots : N) / c_1, \quad c_1 = \frac{(N+1)N}{2}.
    \end{split}
  \end{align}
  \item Modified geometric progression with common ratio 2 and same preference.\\
  \begin{align}
    \begin{split}
    A_1: A_2:\cdots : A_N = B_1:B_2:\cdots :B_N \\
    = (1:1:2:\cdots : 2^{N-2}) / c_2, \quad c_2 = 2^{N-1}.
    \end{split}
  \end{align}
  \item Modified geometric progression with common ratio 2 and reversed preference.\\
  \begin{align}
    \begin{split}
    A_1: A_2:\cdots : A_N = B_N:B_{N-1}:\cdots :B_1 \\
    = (1:1:2:\cdots : 2^{N-2}) / c_3, \quad c_3 = 2^{N-1}.
    \end{split}
  \end{align}
  \item Geometric progression with common ratio 3 and same preference.\\
  \begin{align}
    \begin{split}
    A_1: A_2:\cdots : A_N = B_1:B_2:\cdots :B_N \\
    = (1:3:\cdots : 3^{N-1}) / c_4, \quad c_4 = \frac{3^N-1}{2}.
    \end{split}
  \end{align}
\end{enumerate}
Note that in cases (i)--(iii), the optimal satisfaction matrix achieves $L=0$ since $\forall i;S_i \leq 1$, whereas in case (iv), it is not possible to achieve $L=0$ since $S_N>1$.

Figure \ref{fig:main_result} shows how the loss $L$ for each of the five models changes as the number of options $N$ increases.
First, as expected, Uniform Random (blue line in Fig.~\ref{fig:main_result}) performs poorly under all of the conditions, as it does not take preferences into account. 

Next, as mentioned in the previous study, OAM Attenuation performs slightly worse than Random Order in all of the cases (orange and green lines in Fig.~\ref{fig:main_result}). 
This results in a trade-off between preference satisfaction and confidentiality. While OAM Attenuation has larger losses, Random Order requires the first player to inform the other player of his/her choice. 

Next, it is remarkable that Pure HOM has a very small loss compared to other heuristics in cases when both players have the same preferences (red line in Fig.~\ref{fig:main_result} cases (i), (ii) and (iv)).
In particular, for $N=3$, it can be mathematically proven that the loss is strictly zero when the maximum popularity $S_{\text{max}}$ is less than 1, as Theorem \ref{thm:hom_3players} suggests, but numerical calculations show that there is some residual loss.
This is due to the numerical errors which happen in the optimization of the amplitudes.

Over all, both quantum based sampling methods show promising performance. These methods do not require the direct calculation of a joint preference matrix or the disclosure of player's preferences, yet they can achieve very small losses. 

Remarkably, although case (iv) breaks $\forall i; S_i \leq 1$, the loss by Pure HOM is very close to the optimal loss. 
This implies that Pure HOM can work well in much less restricted conditions than those assumed in Theorem \ref{thm:hom_3players}, as long as the players have the same preferences. In the subsequent section, we examine the optimality of Pure HOM in more general preference settings.

On the other hand,  in case (iii), where the players have reversed preferences, Pure HOM performs worse than Uniform Random.
The reason is twofold.
First, in Pure HOM, we let player A assume that player B has the same preference and amplitude terms, but this assumption is strongly broken in case (iii).
Second, even if player A could take player B's preference into account, the joint selection probability matrix would always be symmetric in Pure HOM, so trying to satisfy the preference of one player will always lead to deviation from the preference of the other player.

\subsection{Optimality of Pure HOM}\label{subsec:phom_usage}
In Theorem \ref{thm:hom_3players}, we proved that when the players have the same preferences over three options and the popularities $S_i$ are all less than 1, Pure HOM can make the loss zero.
Moreover, case (iii) of Sec. \ref{subsec:comp} implies the possibility of Pure HOM being close to the optimal in more general settings.
This section further examines the loss $L$ of Pure HOM under other less restricted settings.

We are also interested in the efficiency of the physical sampling process. Therefore, we define the usage rate
\begin{equation}\label{eq:usage_rate}
  U = \sum_{i,j} r_{i,j}
\end{equation}
to measure the rate of photon pairs successfully used to make conflict-free decisions. In other words, $(1-U)$ of photon pairs are discarded in Pure HOM because the collective decision-making via Pure HOM works only when one photon is observed on side A and the other on side B in Fig. \ref{fig:mod_hom}.
Thus, bigger $U$ means that we are utilizing photon pairs efficiently.

Going beyond the cases studied in Sec.~\ref{subsec:comp}, we consider more general cases where the players have the same preferences and all the popularities $S_i$ are less than 1, but the number of options $N$ is 3--50.
For each number of options $N$, 1000 preferences are randomly chosen so that the maximum popularity is less than 1. 
Then, for each preference setting, the loss $L$ and the usage rate $U$ are calculated as a result of Pure HOM.
Finally, the average loss $L$ and the average usage rate $U$ are calculated over 1000 results. 
Figure \ref{fig:phom_performance_0loss}\subref{subfig:phom_loss_0loss} shows how the average loss $L$ changes as the number of options increases, and Fig. \ref{fig:phom_performance_0loss}\subref{subfig:phom_usage_0loss} shows the change in the average usage rate $U$.
The average losses for OAM Attenuation and Random Order are also presented in Fig. \ref{fig:phom_performance_0loss}\subref{subfig:phom_loss_0loss} as references.

\begin{figure}[htbp] 
    \centering
    \subfloat[\label{subfig:phom_loss_0loss}Average loss $L$]{%
        \centering
        \includegraphics[width=7cm]{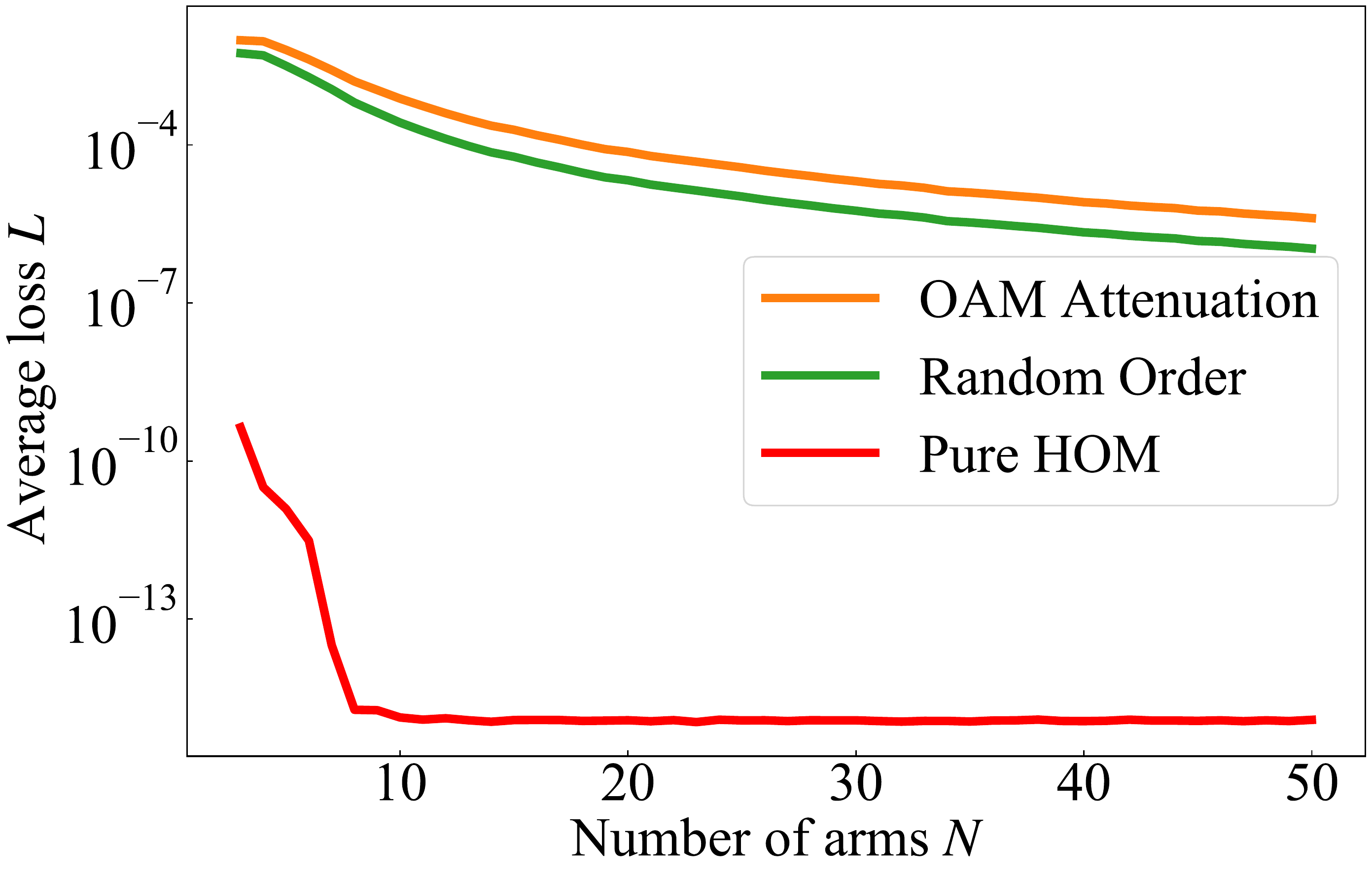}
    }\
    \subfloat[\label{subfig:phom_usage_0loss}Average usage rate $U$]{%
        \centering
        \includegraphics[width=7cm]{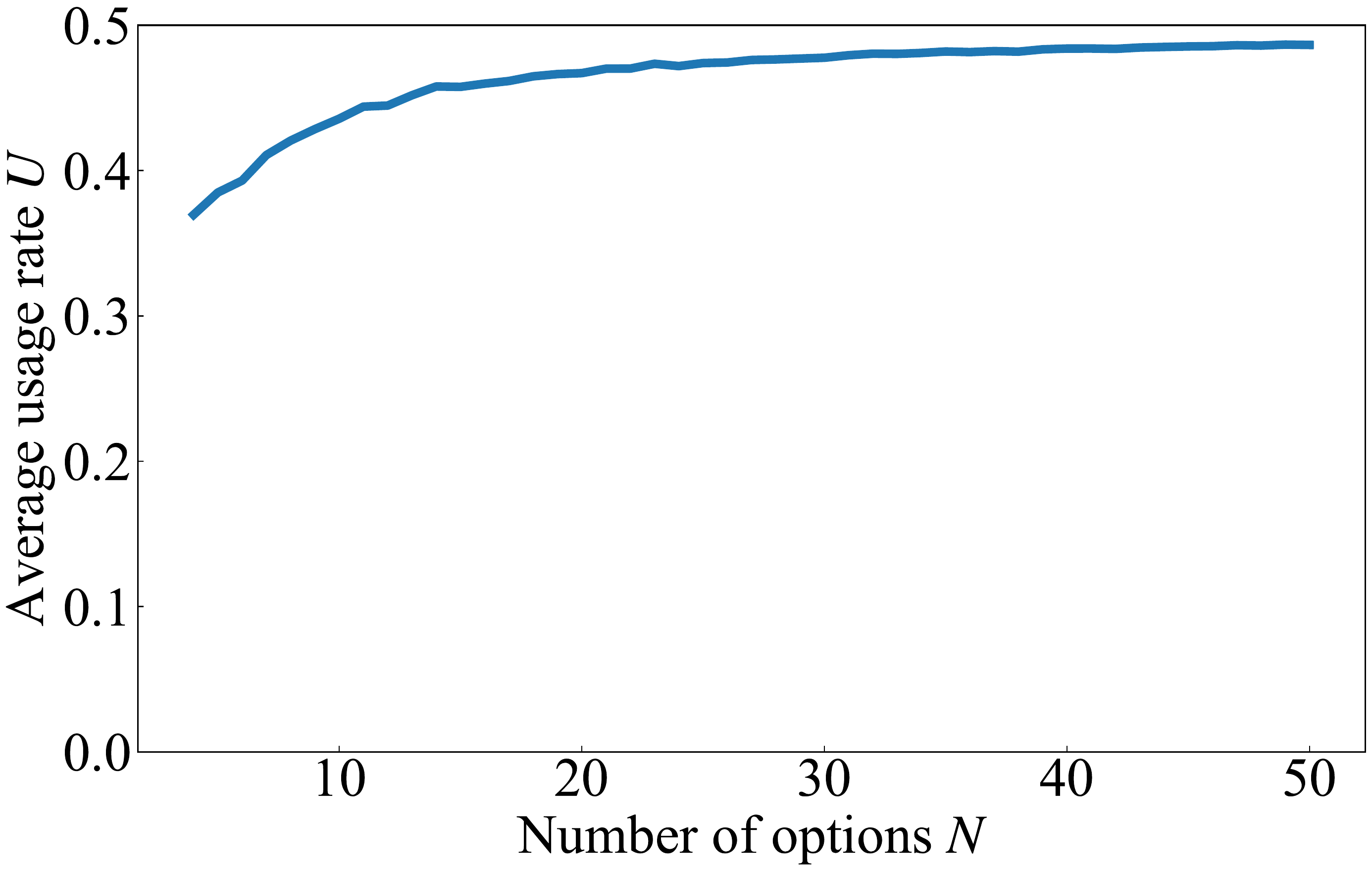}
    }
  \caption{ 
  The performance of the three sampling methods for general cases.
  (a) The average loss and (b) photon usage rate of Pure HOM calculated from $1000$ randomly generated symmetric preference profiles with maximum popularity less than 1.
  From (a), the loss by Pure HOM is quite close to zero, which is the theoretical minimum loss. Moreover, from (b), the usage rate is near 0.5, meaning that photon pairs are used efficiently.
  }
  \label{fig:phom_performance_0loss}
\end{figure}
\begin{figure}[htbp] 
    \centering
    \subfloat[\label{subfig:phom_maape_non0loss}]{%
        \centering
        \includegraphics[width=7cm]{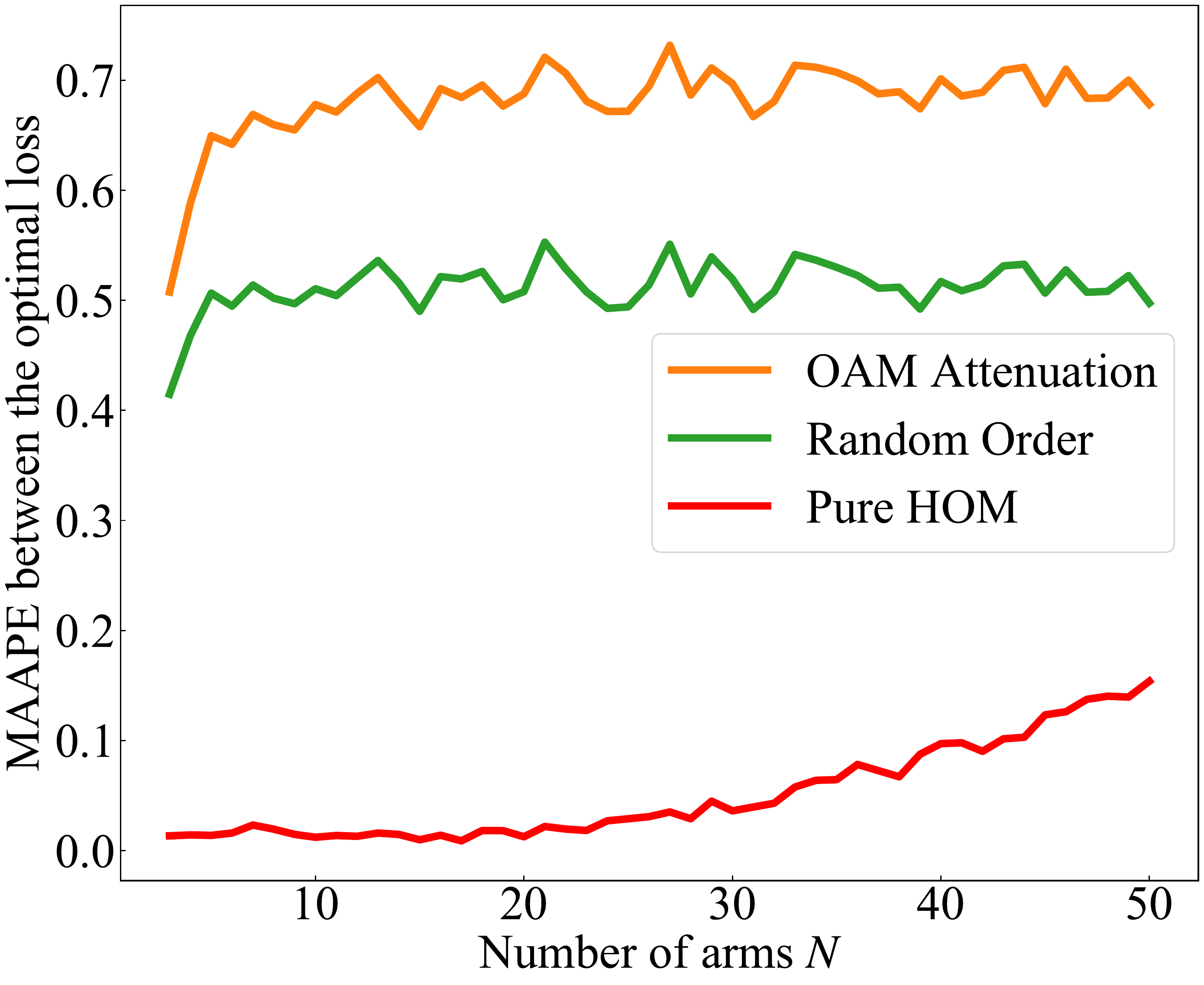}
    }\
    \subfloat[\label{subfig:phom_usage_non0loss}]{%
        \centering
        \includegraphics[width=7cm]{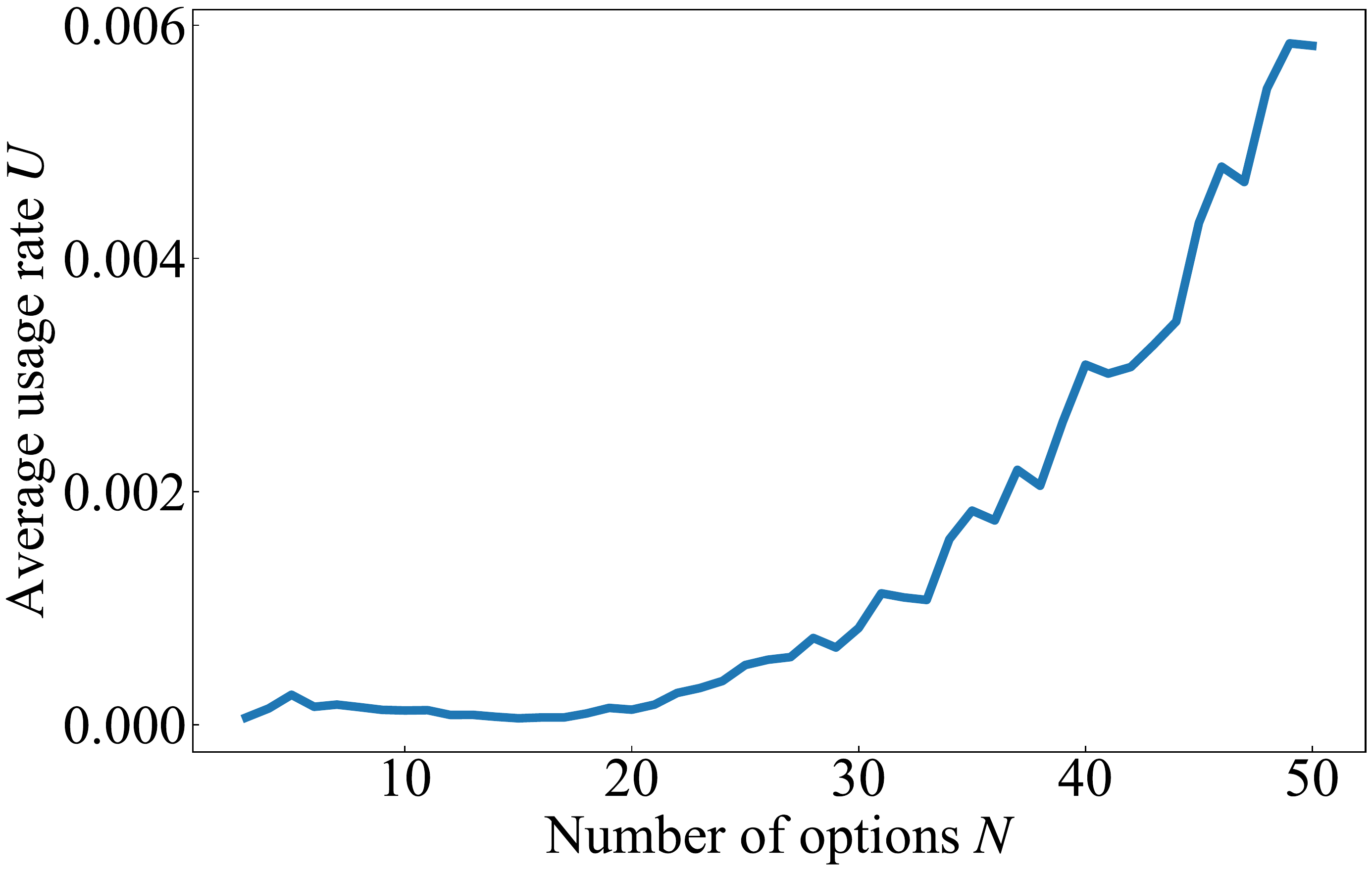}
    }
  \caption{
  The performance of the three sampling methods for general cases.
  (a) MAAPE between the optimal loss and (b) photon usage rate of Pure HOM calculated from $1000$ randomly generated symmetric preference profiles with maximum popularity greater than 1. 
  From (a), it is remarkable that MAAPE of Pure HOM is much closer to zero especially with small $N$. However, from (b), the usage rate is quite small, meaning that many photon pairs are needed to accomplish a collective decision.
  }
  \label{fig:phom_performance_non0loss}
\end{figure}

The average loss stays small for all numbers of options.
The losses for a smaller number of options become relatively bigger because
\begin{enumerate}
  \item The scale of each preference is bigger compared to the cases where $N$ is large, and so is the loss.
  \item There is more chance of the biggest preference being close to 0.5, which destabilizes the numerical optimization.
\end{enumerate}
Also, if we look at how the usage rate $U$ changes, it remains above about 0.35 and for bigger number of options, it approaches 0.5, meaning that about half of the photon pairs generated can be utilized.

Finally, we examine cases where the players have the same preferences, but the maximum popularity is greater than 1, meaning even the optimal satisfaction matrix cannot achieve 0-loss.
However, we must be careful about what metric to use in this case.
The average loss is not an appropriate metric to measure the near-optimality when the maximum popularity $S_{max}$ is greater than 1.
This is because the value of the optimal loss varies depending on $S_{max}$ (refer to Eq. \eqref{eq:minloss_greater_than_1} for the detailed formula).
Even if the average loss of Pure HOM over 1000 profiles is close to the average of the optimal losses, the value at each point can significantly deviate.
For example, suppose there is a preference profile with the optimal loss of $10^{-7}$, and the loss of Pure HOM for the same preference profile is $3\times 10^{-7}$.
Although there is a three-times gap between them, if another preference profile with the optimal loss of 0.1 exists, the error in the first profile is almost completely ignored when averaging the two.
Indeed, when $N=50$ and $S_{max}=1.0$, the optimal loss is 0, but when $S_{max}=2.0$, it is about 0.51, indicating that the scale problem is a critical one.

Accordingly, we use a new metric that reflects the relative error between the Pure HOM loss and the optimal loss rather than the average of the losses when the maximum popularity $S_{max}$ is greater than 1.
The most intuitive one is the mean absolute percentage error (MAPE) shown below, but MAPE cannot be used in this case because it becomes infinitely large when the optimal loss is close to 0.
\begin{equation}
    MAPE = \frac{1}{T}\sum_{i=1}^T \left|\frac{\tilde{L}_i - L_i}{\tilde{L}_i}\right|.
\end{equation}
Here, $T$ is the number of samples, and $\tilde{L}_i$ and $L_i$ respectively denote the optimal loss and the loss of a given sampling method (any one of Pure HOM, OAM Attenuation, or Random Order) for the $i$-th preference profile.
Instead, we quantify the closeness of the two losses over multiple points using the mean arctangent absolute percentage error (MAAPE), which is proposed by Kim {\it{et al}}. to overcome the weakness of MAPE \cite{kim2016new}.
This metric takes the arctangent of the relative error so that the value of the metric does not diverge when the ground truth is close to 0.
The value can take the range 0--$\pi/2$, with the smaller value indicating that the losses are closer.
\begin{equation}
    MAAPE = \frac{1}{T}\sum_{i=1}^T \text{arctan}\left|\frac{\tilde{L}_i - L_i}{\tilde{L}_i}\right|.
\end{equation}

Figure \ref{fig:phom_performance_non0loss}\subref{subfig:phom_maape_non0loss} shows MAAPE between the optimal loss and the three sampling methods: OAM Attenuation, Random Order and Pure HOM, and Fig. \ref{fig:phom_performance_non0loss}\subref{subfig:phom_usage_non0loss} is the change in the average usage rate $U$.
The reason why the results appear noisy is due to the insufficient number of samples.
Nevertheless, we do not believe it is necessary to increase the number of samples to smooth the lines, since our focus here is to compare the optimality of each sampling method, not to estimate individual MAAPE for every $N$ accurately.
We set the number of samples to 1000 in this experiment due to the limitation of computation time.
The graph shows that MAAPE of Pure HOM is much smaller than that of OAM Attenuation or Random Order especially with small $N$, although it gradually deviates from zero when the number of options $N$ is large. 
This is because optimization becomes difficult as the number of parameters becomes larger.
Together with the result shown in Fig.  \ref{fig:phom_performance_0loss}\subref{subfig:phom_loss_0loss}, for both cases where the maximum popularity is greater than or less than 1, Pure HOM can achieve losses that are much closer to the theoretical minimum than the other two sampling methods.

However, when the maximum popularity is greater than 1, the usage rates are significantly lower, on the order of $10^{-3}$, as demonstrated in Fig. \ref{fig:phom_performance_non0loss}\subref{subfig:phom_usage_non0loss}, 
meaning that we have to discard a lot of pairs of photons.

\section{Conclusion}\label{sec:conclusion}
In this paper, we deal with a situation in which multiple players have probabilistic preferences and consider the problem of satisfying their preferences.
The previous study explicitly computed the  joint selection probability matrix that maximized players' satisfaction. 
However, there were two concerns with the previous approach: high computational cost and low confidentiality.
This paper proposes two sampling methods that are implemented in quantum ways, each of which converges to a particular joint selection probability matrix accomplishing a relatively low loss.
We examined the implementation method, computational cost, confidentiality, and the joint selection probability they converge to. 
Specifically, OAM Attenuation allows sampling with zero computational cost and also guarantees a high degree of confidentiality, as players do not need to disclose their preferences or choices.
We also showed that Pure HOM can exclude the necessity to trust a third party while reducing losses to near-optimal values in situations where players have the same preferences.
The property of favoring similar preferences is useful in many real situations.
For example, in the competitive multi-armed bandit problem, since the machines have fixed reward probabilities over time, the players' preferences are expected to converge to similar values for each machine.

Looking ahead to the physical implementation, which we are currently working on, errors in preparing the input photon states will have different degrees of impact on the final joint selection probability matrix for the two quantum sampling methods.
In OAM Attenuation, it is the attenuator part that reflects individual preferences, and the source part plays the role of conflict avoidance.
Even if the phase difference or the amplitude settings of the source part deviate slightly, the effect on the final joint selection probability matrix is expected to be relatively small because the conflict avoidance is still ensured and the preference reflection by the attenuators remains.
On the other hand, in the Pure HOM, if $a_1$ increases by $\epsilon$ and $a_2$ decreases by $\epsilon$ instead, all the values in the first and second row as well as in the first and second column of the joint selection probability matrix will change.
Therefore, the states of the input photons must be carefully controlled.

Future studies include the mathematical or theoretical understanding of why Pure HOM can achieve a loss quite close to the theoretical minimum when players have the same preferences. 
Moreover, examining the possibilities of realizing an efficient sampling method that yields the optimal joint selection probability matrix is an interesting future topic.
In the meantime, we considered the average joint selection probability matrix for each sampling method assuming an infinite number of repetitions.
However, for example, the joint selection probability matrix for Random Order varies greatly depending on the order of players, especially in a small finite number of repetitions. In the case of OAM Attenuation, it also varies depending on the setting of $\theta_i$ at the source point.
The evaluation of such sample-wise variance is also a future topic.
\section*{Acknowledgements}
This work was supported in part by the CREST project (JPMJCR17N2) funded by the Japan Science and Technology Agency and Grants-in-Aid for Scientific Research (JP20H00233) funded by the Japan Society for the Promotion of Science.

\bibliography{reference}

\end{document}